\newtheorem{theorem}{Theorem}
\newtheorem{definition}{Definition}
\newtheorem{corollary}{Corollary}
\tikzstyle{decision} = [diamond, draw, fill=blue!20, 
\tikzstyle{block} = [rectangle, draw, fill=blue!20,  text centered, rounded corners, minimum height=4em]
\tikzstyle{line} = [draw, -latex']
\tikzstyle{cloud} = [draw, ellipse,fill=red!20, node distance=6.6cm,
\tikzstyle{algorithm} = [rectangle, draw, fill=green!20,  text centered, rounded corners, minimum height=4em, minimum width =6em]
\tikzstyle{initialization} = [rectangle, draw,   text centered, minimum height=4em, minimum width =6em]
\def\BibTeX{{\rm B\kern-.05em{\sc i\kern-.025em b}\kern-.08em
    T\kern-.1667em\lower.7ex\hbox{E}\kern-.125emX}}
    \tikzstyle{block}=[draw, rectangle, minimum height=1cm, text width=1.5cm, text centered, draw=darkgray, font=\small]
\tikzstyle{block_medium}=[draw, rectangle, minimum height=1.5cm, text width=2cm, text centered, draw=darkgray, font=\small]
\tikzstyle{block_large}=[draw, rectangle, minimum height=2cm, text width=2cm, text centered, draw=darkgray, font=\small]
\tikzstyle{line} = [draw, -latex]
\newtheorem{Proposition}{Proposition}
\newtheorem{Example}{Example}
\def\F{{\mathcal F}}
\def\M{{\mathcal M}}
\def\X{{\mathcal X}}
\def\Y{{\mathcal Y}}
\begin{document}

\title{Non-adaptive and two-stage coding over the Z-channel
}

\author{\IEEEauthorblockN{Alexey Lebedev}
\IEEEauthorblockA{\textit{Kharkevich Institute}\\
\textit{for Information Transmission Problems} \\
\textit{Russian Academy of Sciences}\\
Moscow, Russia \\
al\_lebed95@mail.ru}\\
\IEEEauthorblockN{Vladimir Lebedev}
\IEEEauthorblockA{\textit{Kharkevich Institute}\\
\textit{for Information Transmission Problems} \\
\textit{Russian Academy of Science}\\
Moscow, Russia \\
lebedev37@mail.ru}
\and
\IEEEauthorblockN{Ilya Vorobyev}
\IEEEauthorblockA{\textit{Center for Computational and Data-Intensive}\\
\textit{Science and Engineering} \\
\textit{Skolkovo Institute of Science and Technology}\\
Moscow, Russia \\
vorobyev.i.v@yandex.ru}\\
\IEEEauthorblockN{Christian Deppe}
\IEEEauthorblockA{\textit{Institute for Communications Engineering} \\
\textit{Department of Electrical and Computer Engineering}\\
\textit{Technical University of Munich }\\
Munich, Germany \\
christian.deppe@tum.de}
}

\maketitle

\begin{abstract}
In this paper, we developed new coding strategies for the Z-channel. In particular, we look at the case with two-stage encoding. In this case, the encoder uses noiseless feedback once and adjusts the further encoding strategy based on the previous partial output of the channel. Nevertheless, the developed codes improve the known results with full feedback for small length and 1 error. A tool for the two-stage strategy is the development of a new optimality condition for non-adaptive codes. 
\end{abstract}

\begin{IEEEkeywords}
Z-channel, optimal codes, two-stage encoding
\end{IEEEkeywords}

\section{Introduction}
Symmetric error-correcting codes have been studied repeatedly. Here one assumes that every symbol can be disturbed and can be decoded falsely as another symbol. However, it also makes sense to consider asymmetric error-correcting codes for combinatorial models. For this purpose in this paper we consider the Z-channel.  
In the Z-channel, a transmitted 1 can be received as a 0, but a transmitted 0 is always correctly received as a 0. This channel is a pure combinatorial model. One considers asymmetric codes with and without feedback. Unlike the probabilistic channel models (see \cite{S48}), combinatorial codes are improved by feedback (see \cite{B68}). Although symmetric channel models have been more thoroughly studied, there has been research on asymmetric codes for a long time (see survey \cite{K81}). 

In this paper, we create coding strategies for the Z-channel with noiseless feedback. This task is equivalent to a variant of the so-called Ulam game \cite{ulam1991adventures}, the half-lie game.
The half-lie game was first defined in \cite{rivest1980coping}. It is a two-person game. Here, a player tries to find an element $x \in \mathcal{M}$ by asking $n$ yes-no questions. The questions are of the form: Is $x \in A$ for any $A \subseteq \mathcal{M}$? The opponent knows $x$ and may lie at most $t$ times if the correct answer to the question is yes. In the original Ulam game \cite{ulam1991adventures}, the opponent is also allowed to lie if the correct answer is no. The original Ulam game was first described by Renyi \cite{renyi61} and was later used to compute the error-correcting capacity for symmetric binary channels by Berlekamp \cite{B68}. In his autobiography, Ulam asked in \cite{ulam1991adventures} for special solutions for $t=1,2,3$ for $M=1000000$. The optimal strategies were determined for $t=1$ in \cite{P87}, for $t=2$ in \cite{G90} and for $t=3$ in \cite{D00} for each $M$. In addition, a table of optimal strategies was created in \cite{D02}.
In the book \cite{Cicalese13}, one can find a good overview of the results. For fixed chosen $t$, the cardinality of the maximal set $\mathcal{M}$ is asymptotically $2^{n+t}t! n^{-t}$ to find a successful strategy for the half-lie game.
This was first shown for $t=1$ in~\cite{cicalese2000optimal} and then for any $t$ in~\cite{dumitriu2004halfliar,spencer2003halflie}.

The games described above are adaptive games, that is, each question is answered immediately and the next question depends on the previously received answers. These games correspond to a coding strategy with immediate feedback after each symbol. 
In \cite{DS05} the authors have shown that for a constant
number of errors, it is sufficient to use the feedback only
once to asymptotically transmit the same number of messages for the $Z$-channel.
In this paper we consider such coding strategies with one feedback. More precisely, we assume that the sender receives feedback only after a sequence of length $n_1$ bits on all of the bits received so far and then sends additional $n_2$ bits.

In Section~\ref{definitions} we give the basic definitions and survey the known results. Then in Section~\ref{sec::bounds for small length} we show how to use the linear programming bound to find asymmetric error-correcting codes with a maximum number of free points and apply this for small $n$. Finally, in Section~\ref{two} we use the previously found codes with free points for a two-stage coding strategy. The obtained results are better than the previously known coding strategies with complete feedback for some length and one error. 

\section{Basic definitions and known results}\label{definitions}

We consider communication over a binary Z-channel
with input alphabet $\X=\{0,1\}$ and output alphabet $\Y=\X$, where a word of length $n$ is sent
by the encoder. Here we assume that a 0 is always transmitted correctly. Furthermore, a 1 can either be transmitted correctly or changed to a 0 by the channel for at most $t$ times (see Fig.~\ref{fig1}).

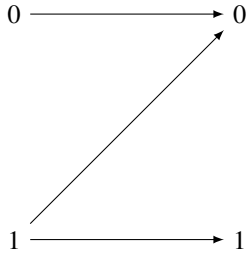
\begin{figure}[t]
\centering
	\begin{tikzpicture}
	\node (A) at (0,0) {0};
	\node (C) at (3,0) {0};
	\node (B) at (0,-3) {1};
	\node (D) at (3,-3) {1};
	\path[line] (A) -- (C);
	\path[line] (B) -- (D);
	\path[line] (B) -- (C);
	\end{tikzpicture}
	\caption{Z-channel}
	\label{fig1}
\end{figure}

The sender wants to transmit a message $m\in\M=\{1,2,\dots, M\}$ over the binary Z-channel with noiseless feedback. $\M$ is the message set and $\X$ is the coding alphabet. 
Suppose now, having sent $(x_1,\ldots ,x_{j-1})=x^{j-1}\in\X^{j-1}$, the encoder
knows the received letters $(y_1,\ldots ,y_{j-1})\in\Y^{j-1}$ before it sends the
next letter $x_j$ ($j=1,2,\ldots ,n$). In this case we are dealing with a channel with complete noiseless feedback.
A channel without feedback is a non-adaptive channel. 
We are considering here a case where the sender does not get a feedback after each bit. We assume that the sender first sends $n_1$ bits and only then receives feedback on the bits received so far. Depending on this, it sends $n_2$ more bits. Thereby $n_1+n_2=n$ is valid. 
This is called a channel with one feedback. 
An encoding function (algorithm) for a channel with one feedback is defined by
\[
c(m,y^{n_1})=((c_{1}(m),c_{2}(m,y^{n_1})),
\]
where $c_{1}:\M\to \X^{n_1}$, $c_{2}:\M\times\X^{n_1}\to \X^{n_2}$ and
\[
x^{n_1}=c_1(m);\;\;(x_{n_1+1}, \ldots x_{n}) = c_2(m, y^{n_1}).
\]
The receiver has a decoding function $d:\X^n\to \M$, which maps the received
sequence $y^n$
to a message.

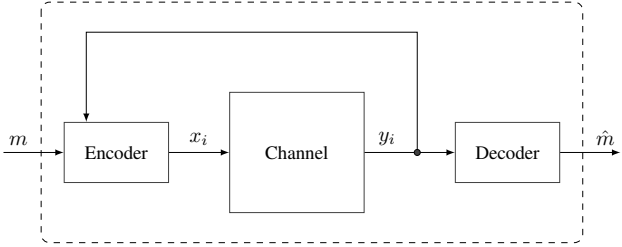
\begin{figure}[t]
\centering
\scalebox{0.8}{
  \begin{tikzpicture}
    \node[block_large] (c) at (0,0) {Channel};
    \node[block] (d) at (3.5,0) {Decoder};
    \node[block] (e1) at (-3,0) {Encoder};
    \node (b) at (-5,0) {};
    \node (u1) at (5.5,0) {};
    \path[line] (b) -- node[near start, above] {$m$} (e1);
    \path[line] (e1.east) -- node[above] {$x_i$} (c.west);
    \path[line] (c) -- node[near start, above] {$y_i$} (d);
    \path[line] (2,0) -- (2,2) -- (-3.5,2) -- ([xshift=-0.5cm]e1.north);
    \path[draw, dashed, rounded corners] (-4.25,2.5) -- (-4.25,-1.5) -- (4.75,-1.5) -- (4.75,2.5) -- cycle;
    \path[line] (d.east) -- node[near end, above] {$\hat{m}$} (u1);
    \node[draw, circle, minimum size=1mm, inner sep=0pt, outer sep=0pt, fill=darkgray] at (2,0) {};
\end{tikzpicture}
}
  \caption{Channel with feedback}
  \label{fig:channel_feedback}
\end{figure}
We say that an $(n,M,t)_{pf}=(c(m, y^{n_1}), d(y^{n}))$ two-stage coding algorithm is successful if a decoding algorithm $d$ correctly obtains a message $m$ encoded with a function $c$ 
if fewer than $t$
errors happen. We emphasize that the receiver should be able to correctly decode the message for all possible error configurations with at most $t$ errors, i.e. it is a combinatorial model, not a probabilistic one.
In our paper, all of the described coding algorithms we have considered are successful.

For our analysis, we need to recall the results about the non-adaptive Z-channel.
For any two vectors $a^n,b^n\in\{0,1\}^n$ we introduce the notation $N(a^n,b^n):=|\{i: a_i=0, b_i=1\}|$. We define the Z-distance $d_Z$ as $d_Z(a^n,b^n):=\max\{N(a^n,b^n),N(b^n,a^n)\}$.

The well-known Hamming distance $d_H$, which counts the number of different bits, can also be calculate with this notation: $d_H(a^n,b^n):=N(a^n,b^n)+N(b^n,a^n)$.
The (Hamming) weight of a sequence (codeword) is the number of 1 in the sequence (codeword).
A non-adaptive asymmetric $(n,M)_Z$-code is
a subset $C\subset \{0,1\}^n$ with cardinality $|C|=M$. This code is called a (successful) asymmetric $(n,M,t)_Z$ code for the $Z$-channel if for all $a^n,b^n\in C$ with $a^n\neq b^n$ holds $d_Z(a^n,b^n)\geq 2t$. The decoder can correct up
to $t$ asymmetric errors with such a code. An $(n,M,t)_Z$-code is called optimal if there does not exist any $(n,M',t)_Z$-code with $M'>M$. By $M_Z(n,t)$ we denote the cardinality of an optimal $(n,M,t)_Z$-code. In coding theory, one would like to determine all values for optimal codes and find constructions for these codes. However, in many cases there are only upper and lower bounds. In papers \cite{al1997new,etzion1998greedy}, references for all these bounds are provided. 

\begin{table}
\begin{tabular}{l|l|l}
$n$ & lower bound for $M$ & upper bound for $M$\\\hline
1&1&1\\
2&2&2\\
3&2&2\\
4&4&4\\
5&6&6\\
6&12&12\\
7&18&18\\
8&36&36\\
9&62&62\\
10&112&117\\
11&198&210\\
12&379&410\\
\end{tabular}
\bigskip
\caption{Bounds for non-adaptive error-correcting codes}\label{tab1}
\end{table}
There can always be multiple constructions for optimal error-correcting codes. For our construction with one feedback we need a refinement of the optimality definition. While for symmetric error-correcting codes for each codeword the number of possible words at the receiver for given $t$ is always $M\choose t$, this is not true for asymmetric error-correcting codes. 
Let $a^n\in C$ be a codeword, then we denote by $B_t(a^n)$ the set of codewords $b^n$ such that $d_Z(a^n, b^n)\le t$. Let $C$ be an $(n,M,t)$ code then we denote by $\mathcal{F}:=\{0,1\}^n\setminus \bigcup_{a^n\in C} B_t(a^n)$ the set of free points and by $F=|\F|$ the number of free points.

We want to maximize the number of free points for a given cardinality of a code and denote an $(n, M, t)_Z$ code with $F$ free points as an $(n, M, F, t)_Z$ code. We call an $(n, M, F, t)_Z$ code $F$-optimal if there does not exist an $(n, M, F', t)_Z$-code such that $F'>F$. In the next section we construct $F$-optimal codes for small length $n$, $t=1$, $M\le M_Z(n, t)$.
In our paper we consider the case $t=1$. We summarize the best results for the non-adaptive case in Table~\ref{tab1}. Here, the results for the best coding strategies are from \cite{kim1959single,V73,delsarte1981bounds} and the best upper bounds are from \cite{weber1987new}. In addition, the adaptive case for $t=1$ was considered in \cite{cicalese2000optimal}. Some of our coding strategies in two stages are better than those developed there.
A solution for fixed errors when the block length approaches infinity was given for the adaptive case in \cite{dumitriu2004halfliar} and for the case with partial feedback (in two stages) in \cite{dumitriu2005two}.

\section{Asymmetric Error-Correcting Codes with Maximal Number of Free Points}\label{sec::bounds for small length}

In this section we use linear programming technique to obtain a bound on the number of free points in codes correcting one asymmetric error. We then apply this method to find these codes for small $n$.

\subsection{Linear Programming Upper Bound}
Delsarte and Piret\cite{delsarte1981bounds}, Klove\cite{klove1981upper}, and Weber et al. \cite{weber1987new} used linear programming to prove upper bounds on the size of non-adaptive codes correcting asymmetric errors. We modify this method to obtain upper bounds on the number of free points in a code of fixed size.

\begin{theorem}\label{th::linear programming upper bounds}
Let $n\ge 2t\ge 2$, $1\le M\le M_Z(n, t)$.
Define
$$
\overline{F}(n, M, t)=\max(2^n-\sum\limits_{i=0}^n z_i(i + 1))
$$
where the maximum is taken over the following constraints
\begin{enumerate}
    \item $z_i$ are nonnegative integers.
    \item $z_0=1$, $z_1=0$.
    \item 
    \begin{align*}
        \sum\limits_{i=1}^s\binom{n - w + i}{i}z_{w-i}+\sum\limits_{j=0}^{t-s}\binom{w+j}{j}z_{w+j}\le \binom{n}{w}
    \end{align*}
    for $0\le s\le t<w<n - t$.
    \item 
    \begin{multline*}
        \sum\limits_{j=s}^rz_jA_l(r-s, 2t+2, r-j)\\
        \le A_u(n+r-s, 2t+2, r)
    \end{multline*}
    for $0\le s\le r$.
    \item 
    \begin{multline*}
        \sum\limits_{j=s}^rz_{n-j}A_l(r-s, 2t+2, r-j)\\
        \le A_u(n+r-s, 2t+2, r)
    \end{multline*}

    for $0\le s \le r$.
    \item 
    \begin{multline*}
        \sum\limits_{i=1}^{s}\binom{n - w + i}{i}z_{w-i}+\sum\limits_{j=0}^{t-s}\binom{w+j}{j}z_{w+j}\\
        +\left(\binom{w+t-s+1}{w}-\binom{t+1}{t-s+1}\right.\\
        \cdot\left.\left\lfloor\frac{w+t-s+1}{t+1}\right\rfloor\right)
         z_{w+t-s+1}\le \binom{n}{w}
    \end{multline*}
    for $0\le s\le t<w<n - t$. 
    \begin{multline*}
        \sum\limits_{i=1}^{s}\binom{n - w + i}{i}z_{w-i}+\sum\limits_{j=0}^{t-s}\binom{w+j}{j}z_{w+j}\\
        +\left(\binom{n-w+s+1}{s+1}-\binom{t+1}{t-s}\right.\\
        \cdot\left.\left\lfloor\frac{n-w+s+1}{t+1}\right\rfloor\right)
        z_{w-s-1}\le \binom{n}{w}
    \end{multline*}
    for $0\le s\le t<w<n - t$. 
    \item
    $$
    \sum\limits_{i=0}^nz_i=M.
    $$
\end{enumerate}
Then the number of free points $F$ in an $(n, M, t)_Z$ code is at most $\overline{F}(n, M, t)$.
\end{theorem}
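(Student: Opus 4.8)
The plan is to recast the distribution of a code's codewords over Hamming weights as a vector $(z_0, \ldots, z_n)$, where $z_i$ counts codewords of weight $i$, and then to show that each numbered constraint in the theorem is a valid necessary inequality satisfied by any $(n,M,t)_Z$ code. Since the number of free points is $F = 2^n - |\bigcup_{a^n\in C} B_t(a^n)|$, and a codeword of weight $i$ covers a ball whose size I will bound below by $i+1$ (the codeword itself together with the $i$ ways to turn a single $1$ into a $0$ being the crudest count, with the more refined counting folded into the constraints), the objective $2^n - \sum_i z_i(i+1)$ becomes an upper bound on $F$ precisely when the $z_i$ range over a relaxation of the feasible weight-distributions. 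Thus the theorem is really the assertion that the feasible region of the stated linear program contains the weight-distribution of every genuine code of size $M$, so the LP maximum dominates $F$.

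First I would fix the easy structural constraints. Constraint (7) is the definitional count $\sum_i z_i = M$; constraints (1)--(2) record that the $z_i$ are nonnegative integers and that we may assume a normalization placing one codeword at weight $0$ (translating the code so $0^n\in C$) with no codeword at weight $1$ (since any weight-one word would lie within Z-distance $1$ of $0^n$, violating $d_Z \ge 2t \ge 2$). The substantive work is constraints (3) and (6): these are ``weight-layer'' inequalities. For a fixed intermediate weight $w$, I would count, in two ways, the incidences between codewords and the $\binom{n}{w}$ words of weight $w$. A codeword of weight $w-i$ can reach a weight-$w$ word by having $i$ zeros turned — contributing the factor $\binom{n-w+i}{i}$ — while a codeword of weight $w+j$ reaches it by $j$ suitable positions, giving $\binom{w+j}{j}$; the $d_Z \ge 2t$ separation guarantees these contributions come from distinct codewords and hence sum to at most $\binom{n}{w}$, which is exactly inequality (3). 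Constraints (6) are the sharpened versions of (3): they add one further codeword layer (at weight $w+t-s+1$ or $w-s-1$) whose contribution is reduced by a combinatorial correction term reflecting that, by the distance condition, such codewords cannot all cover distinct weight-$w$ words independently — the floor expressions encode a packing constraint à la a Plotkin- or Johnson-type argument on the residual positions.

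The remaining constraints (4) and (5) are the \emph{linear-programming-bound} ingredients proper, importing the classical Delsarte-type inequalities expressed through the auxiliary quantities $A_l(\cdot,\cdot,\cdot)$ and $A_u(\cdot,\cdot,\cdot)$ (lower and upper bounds on sizes of constant-weight codes with prescribed minimum distance $2t+2$). Here I would invoke, for each shift $s$, the fact that restricting the code to the weight-window $[s,r]$ (respectively to the top window via $z_{n-j}$ in (5)) yields a constant-weight-like configuration whose cardinality is bounded above by $A_u(n+r-s, 2t+2, r)$, while the contribution of a weight-$j$ layer is weighted by $A_l(r-s, 2t+2, r-j)$ counting how many window-words each such codeword can account for. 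Assembling all these as linear constraints on the integer vector $(z_i)$, the weight-distribution of any actual code is feasible, so its induced free-point count $2^n - \sum_i z_i(i+1)$ is at most the LP optimum $\overline{F}(n,M,t)$, completing the proof. The main obstacle I anticipate is verifying the correction terms in constraint (6): establishing that the floor-and-binomial expressions are genuinely valid lower bounds on the overlap among the extra layer's covered words requires a careful double-counting or packing argument, and getting the boundary cases ($s=0$, $s=t$, $w$ near $t$ or $n-t$) to respect the stated ranges is where the delicacy lies.
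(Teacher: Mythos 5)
Your overall route is the same as the paper's: take $z_i$ to be the weight distribution of the code, argue that constraints (1) and (3)--(6) are necessary conditions on the weight distribution of any code correcting $t$ asymmetric errors, note that (7) fixes the size, and observe that $2^n-\sum_i z_i(i+1)$ dominates the number of free points because the downward balls of distinct codewords are pairwise disjoint (your reading of $i+1$ as a lower bound on the ball size is in fact slightly more careful than the paper's wording, since the exact ball size is $\sum_{j\le t}\binom{i}{j}$ and equals $i+1$ only for $t=1$). The main difference in emphasis is that the paper does not re-derive (3)--(6) at all: it simply cites \cite{delsarte1981bounds}, \cite{klove1981upper}, \cite{weber1987new} for them, whereas you sketch the double-counting and constant-weight-code arguments behind those inequalities; your sketches are consistent with the cited proofs, and you correctly flag constraint (6) as the delicate one.

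There is, however, one genuine error: your justification of constraint (2) by ``translating the code so $0^n\in C$.'' XOR-translation preserves Hamming distance but \emph{not} Z-distance: $d_Z(10,01)=1$, while after adding $01$ one gets $d_Z(11,00)=2$. In general a translate of an $(n,M,t)_Z$ code need not be an $(n,M,t)_Z$ code, and translation does not preserve downward balls, hence not the number of free points either, so the objective value is not invariant under it. Since the theorem asserts the bound for \emph{every} $(n,M,t)_Z$ code, including codes not containing $0^n$, this normalization needs a different argument (this is what the paper's laconic ``obviously'' conceals). The repair: it suffices to bound $F$ for a code maximizing the number of free points, and any code can be converted into one containing $0^n$ with at least as many free points by replacing a minimum-weight codeword $a$ with $0^n$. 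The code property is kept because $d_Z(0^n,b)=\mathrm{wt}(b)$, and the distance condition forces every remaining codeword $b$ to satisfy $\mathrm{wt}(b)\ge 2t$ (if $\mathrm{wt}(a)<2t$ then $N(b,a)<2t$, so $N(a,b)\ge 2t$ and thus $\mathrm{wt}(b)\ge 2t$; otherwise $\mathrm{wt}(b)\ge\mathrm{wt}(a)\ge 2t$). The number of free points cannot decrease, because the downward ball of $0^n$ is the single point $0^n$, whereas the removed ball of $a$ contained at least $\mathrm{wt}(a)+1$ points that no other codeword covers, by disjointness. Once $0^n\in C$, the conclusion $z_1=0$ follows exactly as you state. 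With this substitution in place of the translation step, your proof matches the paper's.
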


\begin{proof}
Let $z_i$, $0\le i \le n$ be the number of codewords of weight $i$ in a code of length $n$ correcting $t$ asymmetric errors. In papers\cite{delsarte1981bounds,klove1981upper,weber1987new} it was proved that $z_i$ should satisfy constraints 1), 3), 4), 5), and 6).
Obviously, a code with a maximal amount of free points should satisfy condition 2). The last condition fixes the size of the considered code. The maximized expression $\overline{F}(n, M, t)$ corresponds to the number of free points in a code with weight distribution $\{z_i\}$.
\end{proof}

\subsection{Computer search for $F$-optimal codes}\label{subsection::computer search}
We use an similar linear programming approach to find codes with a maximal number of free points. Fix length $n$, code size $M$, and number of corrected asymmetric errors $t$. Introduce $2^n$ binary variables $x_i$, corresponding to each possible codeword. For each point $p$ define a set $D_t(p)$ of codewords from which this point can be obtained with at most $t$ asymmetric errors. Then we introduce a constraint $\sum\limits_{i\in D_t(p)}x_i\le 1$. We are trying to maximize the number of free points $2^n-\sum\limits_{i=0}^n z_i(i + 1)$, where $z_i$ is a number of codewords of weight $i$. Note that the number of free points can be expressed through variables $x_i$. We add a constraint $\sum x_i=M$ to fix the size of a code. Note that any solution of this linear program (if it exists) gives an optimal number of free points for a given code length and size. To speed up the program, we add the constraints from Theorem~\ref{th::linear programming upper bounds}.

Still, this program has $2^n$ variables, and therefore, can be solved only for small $n$. In Table~\ref{table::optimal number of free points} we provide parameters of some $F$-optimal codes for $t=1$ and $n=6, 7, 8$ and $9$. Optimal weight distributions are depicted in Table~\ref{table::optimal weight distributions}.

The parameters of codes for $n=6, 8$ and $9$ coincide with the upper bounds given by Theorem~\ref{th::linear programming upper bounds}. 
For $n=7$ we have 48 points instead of 49 for the maximal size given by Theorem~\ref{th::linear programming upper bounds}, i.e. the upper bound from Theorem~\ref{th::linear programming upper bounds} is not achieved. All other values coincide with upper bounds. 

We note that the codes with optimal weight distributions for $n=7, 8$ were constructed in~\cite{delsarte1981bounds}. 
The code for $n=6$ is also known.  For $n=6, 8$ and all sizes $M\le M_Z(n, 1)$, an optimal code can be obtained from the code of maximal size with weight distribution described in Table~\ref{table::optimal weight distributions} by deleting $M_Z(n, d) - M$ codewords of maximal weight. For $n=7$ and $M=17$ we know two weight distributions with the same number of free points, which are achieved by two different $F$-optimal codes: $1+0+3+5+5+3+0+0$ and $1+0+3+5+6+1+1+0$. By deleting the codeword of maximal weight from the code with a second weight distribution, we can obtain $F$-optimal codes for all $M<17$. However, only the code with a first weight distribution can be extended to the code of size 18.

Our program works for all lengths $n<9$. The complexity is too great for longer block lengths. Since the $F$-optimal construction for $n=6, 8$ are given by nested codes, we restricted ourselves to search such families where $n=9$. This approach allows us to find a nested family such that a code with a maximal size of 62 has a weight distribution which is given in Table~\ref{table::optimal weight distributions}. The numbers of free points in this family coincides with upper bound given by Theorem~\ref{th::linear programming upper bounds} for all sizes $M$. It means that the constructed family is $F$-optimal. Note that the code of optimal size for $n=9$ constructed in~\cite{delsarte1981bounds} has 171 free points, while we have 177. 

\begin{table}[h]
\renewcommand{\arraystretch}{1.2}
\centering
\begin{tabular}{|c|c|c|c|c|c|c|}
\hline
\multirow{2}{*}{$n=6$}  & size $M$      & 12  & 11  & 10  & 9   & 8   \\ \cline{2-7} 
                        & free points $F$ & 16  & 23  & 28  & 33  & 38  \\ \hline
\multirow{2}{*}{$n=7$}  & size $M$       & 18  & 17  & 16  & 15  & 14  \\ \cline{2-7} 
                        & free points $F$ & 48  & 56  & 62  & 68  & 73  \\ \hline
\multirow{2}{*}{$n=8$}  & size $M$       & 36  & 35  & 34  & 33  & 32  \\ \cline{2-7} 
                        & free points $F$ & 76  & 85  & 92  & 99  & 106 \\ \hline
\multirow{2}{*}{$n=9$}  & size $M$       & 62  & 61  & 60  & 59  & 58  \\ \cline{2-7} 
                        & free points $F$ & 177 & 186 & 193 & 200 & 207 \\ \hline
\end{tabular}
\bigskip
\caption{Optimal number of free points for $(n, M, 1)$ codes}\label{table::optimal number of free points}
\end{table}

\begin{table}[h]
\renewcommand{\arraystretch}{1.2}
\centering
\begin{tabular}{|c|c|}
\hline
length and size & \multicolumn{1}{c|}{weight distribution} \\ \hline
$n=6, M=12$      & 1+0+3+4+3+0+1                                             \\ \hline
$n=7, M=18$      & 1+0+3+5+5+3+1+0                                         \\ \hline
$n=7, M=17$      & 1+0+3+5+6+1+1+0                                           \\ \hline
$n=8, M=36$      & 1+0+4+8+10+8+4+0+1                                        \\ \hline
$n=9, M=62$      & 1+0+4+9+17+17+11+2+1+0                                    \\ \hline
\end{tabular}
\bigskip
\caption{Optimal weight distributions}
\label{table::optimal weight distributions}
\end{table}

\section{Codes with noiseless feedback in two stages}\label{two}

In this section we describe a two-stage strategy of transmission of messages over an Z-channel with 1 error. We split the codelength $n$ into two parts $n_1$ and $n_2$, $n=n_1+n_2$. At the first stage we transmit sequences of length $n_1$. There are two possibilities: either we have an error at the first stage or not. If there was an error, then we want to transmit the position of the error and some additional information. If there was no error then we use a non-adaptive code of length $n_2$, correcting one asymmetric error to transmit second part of the message.

Define a directed graph $G=(V, E)$ in the following way. Vertex set $V$ consists of $2^{n_1}$ vertexes. Each vertex is associated with a binary word of length $n_1$. We connect vertexes $v_1, v_2\in V$ with an arc $(v_1, v_2)$ if the codeword $v_2$ can be obtained from the codeword $v_1$ with exactly 1 asymmetric error, i.e. $N(v_1, v_2)=0$ and $N(v_1, v_2)=1$. Note that we don't connect the vertex with itself, i.e. there are no loops.

To each vertex $v\in V$ we appoint an $(n_2, M(v), F(v), 1)_Z$ code $C(v)$. For each free point $p$ of the code $C(v)$ we define a function $d_v(p): \mathcal{F}(C) \rightarrow U$, where $U=\{u\in V| (u, v)\in E\}$ In addition we demand that the number of points $p$ such that $d_v(p)=u$ is equal to $M(u)$ for each $u\in U$. It is possible if the following condition is satisfied for all $v\in V$
\begin{equation}\label{constraints}
    \sum\limits_{u:(u,v)\in E} M(u)\le F(v).
\end{equation}

Describe a decoding algorithm $d$. The first $n_1$ positions of the received codeword $y$ form a vector $v$, the last $n_2$ positions form a vector $v'$. If the vector $v'$ belongs to a ball $B_1(a)$ for some $a\in C(v)$, then we conclude that there was no error at the first $n_1$ symbols and decode the second part as $a$. Otherwise, the error has occurred at the first part. We decode the first part of the message as $u=d_{v}(v')$. Since there are $M(u)$ free points which can be decoded to $u$ from $v$, we transmit $M(u)$ messages in case of first part $u$. 

\begin{theorem}
Given a directed graph $G=(V, E)$ described above, let $C(v)$ be an $(n_2, M(v), F(v), 1)_Z$ code. If constraints~\eqref{constraints} are satisfied then 
\begin{equation}\label{number of transmitted words}
M=\sum\limits_{u\in V}M(u)
\end{equation}
messages can be transmitted.
\end{theorem}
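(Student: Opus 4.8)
The plan is to establish that the two-stage coding algorithm described just before the theorem is successful, i.e., that the decoder $d$ correctly recovers every transmitted message whenever at most one asymmetric error occurs, and that the total number of distinct messages encoded equals $\sum_{u\in V}M(u)$. The strategy is to split into the two error scenarios---no error at the first stage, and one error at the first stage---and verify correct decoding in each. I would first set up notation: a message is encoded by choosing a first-stage word $v\in V$ together with, depending on whether the message is of ``first type'' or ``second type,'' either a codeword of $C(v)$ (no-error branch) or a designated free point mapped by $d_v$ to some neighbor (error branch). The counting identity \eqref{number of transmitted words} will then follow once the two branches are shown to be disjoint and exhaustive.

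First I would analyze the no-error case. Suppose the first-stage word sent is $v$ and no error occurs on the first $n_1$ symbols, so the received prefix is exactly $v$. The second stage sends a codeword $a\in C(v)$, and since $C(v)$ corrects one asymmetric error, the received suffix $v'$ lies in $B_1(a)$ and in no other ball $B_1(a')$ with $a'\in C(v)$, $a'\neq a$. Hence the decoder correctly recovers the prefix $v$ and the codeword $a$; this accounts for $M(v)$ messages per vertex $v$. The key fact here is simply that the appended non-adaptive code does its job, which is guaranteed by its defining Z-distance property stated in Section~\ref{definitions}.

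Next I would analyze the one-error case. If exactly one asymmetric error occurs at the first stage, then a single $1$ was flipped to $0$, so the received prefix is some $v$ with $(u,v)\in E$, where $u$ was the transmitted prefix (this is precisely the arc relation defined via $N(u,v)=0$, $N(v,u)=1$). Because an error has already been spent, the second stage transmits a \emph{free point} $p$ of $C(v)$ rather than a codeword; since $v'$ then lies outside every ball $B_1(a)$, $a\in C(v)$, the decoder correctly detects that the error occurred in the first part and outputs $u=d_v(v')=d_v(p)$. The crucial consistency requirement is that the encoder, \emph{before} knowing the feedback, can commit to a free point of $C(v)$ for every possible $v$ reachable from $u$; the decoding map $d_v$ partitions the free points of $C(v)$ so that exactly $M(u)$ of them decode to each neighbor $u$, which is feasible precisely because the free-point budget inequality \eqref{constraints} holds at $v$.

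The main obstacle---and the step deserving the most care---is verifying that the error-branch encoding is well-defined and injective across messages, since here the feedback is used. For a message of error-type the sender knows $u$ but not which neighbor $v$ will be observed after the single error; the construction must ensure that for \emph{each} arc $(u,v)\in E$ there is a free point of $C(v)$ assigned to this message, and that distinct error-type messages destined for the same $u$ receive disjoint assignments at every such $v$. This is exactly what the condition ``the number of points $p$ with $d_v(p)=u$ equals $M(u)$'' provides, and its realizability reduces to \eqref{constraints} by summing the demands $\sum_{u:(u,v)\in E}M(u)$ against the supply $F(v)$ of free points at $v$. Once injectivity within and across both branches is established, the two branches together enumerate $\sum_{v\in V}M(v)$ distinct messages, yielding \eqref{number of transmitted words}; I would close by noting the branches are distinguishable at decoding (suffix inside some $B_1(a)$ versus not), so no collision occurs between the no-error and one-error message sets.
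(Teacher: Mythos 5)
Your proposal is correct and follows essentially the same route as the paper, whose ``proof'' is precisely the construction and decoding description preceding the theorem: distinguish the two channel scenarios by whether the received suffix lies in some ball $B_1(a)$, use the one-error-correcting property of $C(v)$ in the no-error case, use the exactly-received free point and the map $d_v$ in the one-error case, and invoke constraint~\eqref{constraints} to guarantee that $M(u)$ disjoint free points of $C(v)$ can be reserved for each in-neighbor $u$. Your write-up adds the explicit verification (disjointness of balls, well-definedness of the encoder's free-point assignments across all neighbors $v$ of $u$) that the paper leaves implicit; just note that the ``first type/second type'' phrasing in your opening paragraph is misleading, since the branch is chosen by the channel's error pattern, not by the message, as your own later paragraphs make clear.
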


For a specific choice of parameters $M(v)$ and $F(v)$, we have the following corollary.

\begin{corollary}
Let $M(v)=M_w$, $F(v)=F_w$ for all $v\in V$ such that the number of ones in $v$ is $w$, i.e. $M(v)$ and $F(v)$ depend only on the weight of $v$.  If constraints
\begin{equation}\label{simplified constaint}
    (n_1-w) M_{w+1}\le F_{w}
\end{equation}
are satisfied for all $w\in[0, n_1-1]$, then the number of transmitted messages is
\begin{equation}\label{number of transmitted words for special case}
   M=\sum\limits_{w=0}^{n_1}\binom{n_1}{w}M_w.
\end{equation}
\end{corollary}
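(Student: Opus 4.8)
The plan is to derive the corollary as a direct specialization of the preceding theorem, so the only real work is to check that the symmetric choice $M(v)=M_w$, $F(v)=F_w$ turns the general feasibility condition~\eqref{constraints} into the scalar inequality~\eqref{simplified constaint}, and that the counting formula~\eqref{number of transmitted words} collapses to~\eqref{number of transmitted words for special case}. First I would fix a vertex $v$ of weight $w$ and determine its in-neighborhood $\{u : (u,v)\in E\}$ in the graph $G$.

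The key combinatorial step is this in-degree count. By definition an arc $(u,v)$ records that $v$ arises from $u$ by exactly one asymmetric error, i.e. by flipping a single $1$ of $u$ to a $0$; hence every in-neighbor $u$ of $v$ has weight $w+1$ and is recovered from $v$ by flipping one of its $0$-coordinates back to a $1$. Since $v$ has exactly $n_1-w$ zero-coordinates, $v$ has precisely $n_1-w$ in-neighbors, all of weight $w+1$. Under the weight-dependence assumption each such in-neighbor satisfies $M(u)=M_{w+1}$, so the left-hand side of~\eqref{constraints} evaluates to
\[
\sum_{u:(u,v)\in E} M(u) = (n_1-w)\,M_{w+1},
\]
while its right-hand side is $F(v)=F_w$. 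Because this quantity depends on $v$ only through its weight $w$, the whole family of constraints~\eqref{constraints} (one per vertex) holds for all $v\in V$ precisely when $(n_1-w)M_{w+1}\le F_w$ holds for every $w$, which is exactly~\eqref{simplified constaint}. The boundary case $w=n_1$ deserves a one-line remark: the all-ones vertex has no incoming arcs, so its constraint is the vacuous $0\le F_{n_1}$, which is why the range in~\eqref{simplified constaint} may stop at $w=n_1-1$.

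Having verified feasibility, I would invoke the theorem to conclude that the number of transmissible messages equals $M=\sum_{u\in V}M(u)$ as in~\eqref{number of transmitted words}. Grouping the vertices of $V=\{0,1\}^{n_1}$ by weight, and using that there are $\binom{n_1}{w}$ words of weight $w$, each carrying $M(u)=M_w$, yields
\[
M=\sum_{u\in V}M(u)=\sum_{w=0}^{n_1}\binom{n_1}{w}M_w,
\]
which is~\eqref{number of transmitted words for special case}. There is no substantive obstacle here; the only point that requires care is the orientation of the arcs, since an asymmetric error lowers the weight and therefore the in-neighbors of a weight-$w$ vertex sit at weight $w+1$ rather than $w-1$. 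Getting this direction right is exactly what makes the multiplicity $n_1-w$ (the number of zeros of $v$), and not $w$, appear in~\eqref{simplified constaint}.
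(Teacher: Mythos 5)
Your proof is correct and follows exactly the route the paper intends: the corollary is an immediate specialization of Theorem~2, and you supply precisely the details the paper leaves implicit, namely that the in-neighbors of a weight-$w$ vertex are the $n_1-w$ words of weight $w+1$ obtained by raising one of its zeros (so constraint~\eqref{constraints} collapses to~\eqref{simplified constaint}, vacuously for $w=n_1$), and that grouping $\sum_{u\in V}M(u)$ by weight gives~\eqref{number of transmitted words for special case}. Your care with the arc orientation also correctly resolves the paper's typo in the edge definition, where ``$N(v_1,v_2)=0$ and $N(v_1,v_2)=1$'' should read $N(v_1,v_2)=0$ and $N(v_2,v_1)=1$.
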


\begin{table}[h]
\renewcommand{\arraystretch}{1.2}
\begin{tabular}{|c|c|c|c|c|c|c|c|c|c|c|}
\hline
$n$  & 5 & 6  & 7  & 8  & 9  & 10  & 11  & 12 
\\ \hline
$M$ (Cor 1) & 9 & 16 & 29 & 52 & 96 & 177 & 327 & 607 
\\ \hline
$M$ (Th 2) & 9 & 16 & 29 & 53 & 97 & $\ge 177$ & $\ge 329$ & $\ge 607$ 
\\ \hline
\end{tabular}

\bigskip

\caption{Number of messages transmitted by proposed algorithm}\label{table::lower bounds for one feedback}
\end{table}

We provide number of transmitted messages given by Corollary 1 and Theorem~2 in Table~\ref{table::lower bounds for one feedback}. A dynamic programming technique was used to find an optimal assignment of $M_w$, $F_w$. In the following examples we give a detailed description of codes given by Corollary 1 and Theorem 2 for $n=9$ and $n=8$ correspondingly.

\begin{Example}
$n=9$, $M=96$.

Let  $n_1=5$, $n_2=4$. For vertexes with an associated binary word with weight $0$ and $1$ we appoint a $(4,2,12,1)_Z$ code $\{0000, 0011\}$. For weights $2$ and $3$ we appoint a $(4,3,9,1)_Z$ code $\{0000, 0011, 1100\}$. And for weights $4$ and $5$ - a $(4,4,4,1)_Z$ code $\{0000, 0011, 1100, 1111\}$ respectively.

Now we can check constraints $(n_1-w)\cdot M_{w+1}\leq F_w$ for $w\in[0, 4]$.
 

$w=0$: $5\cdot2\leq12$,

$w=1$: $4\cdot3\leq12$

$w=2$: $3\cdot3\leq9$

$w=3$: $2\cdot4\leq9$

$w=4$: $1\cdot4\leq4$

Using formula~\eqref{number of transmitted words for special case} we compute that the total number of transmitted messages is $1\cdot2 +5\cdot2 +10\cdot3 +10\cdot3 +5\cdot4+1\cdot4=96$.
\end{Example}

\begin {Example}

$n=8$, $M=53$.

Let $n_1=6$, $n_2=2$. For the vertex $111111$ we appoint a $(2,2,0,1)_Z$ code $\{00, 11\}$. For the vertexes $111000$, $001110$, $010101$, $100011$, $100100$, $010010$, $001001$, $110000$, $010100$, $001000$, $000010$, $000001$ we appoint a $(2,0,4,1)_Z$ code. For the other vertexes we appoint a $(2,1,3,1)_Z$ code  $\{00\}$. One can verify that constraint~\eqref{constraints} is satisfied. One can check that the constraint for $v=101000$ is satisfied: There are 4 vertexes $\{111000, 101100, 101010, 101001\}$ such that $(u, v)\in E$.
\begin{align*}
    M(111000)&=0,\;M(101100)=1,\\
    M(101010)&=1,\;M(101001)=1.
\end{align*}
Therefore, their sum is not greater than $F(101000)=3$, i.e. the constraint for vertex $v=101000$ holds. The Constraints for the other vertexes $v$ can be verified in a similar manner.

The total number of transmitted messages is $2+0+51=53$.

\end{Example}


We compare our results with the strategy in \cite{cicalese2000optimal} with complete feedback. In this case for $M=2^m$ there is a strategy with $n=m-1+\lceil \log_2 (m+3)\rceil $ number of tests. Although here we use only two-stage coding, we get better values than in \cite{cicalese2000optimal} where full feedback is used. In particular, we improve the values for $n=5$ and $n\geq 8$.

\begin{table}[h]
\renewcommand{\arraystretch}{1.2}
\begin{tabular}{|c|c|c|c|c|c|c|c|c|c|c|}
\hline
$n$  & 5 & 6  & 7  & 8  & 9  & 10  & 11  & 12  & 13    \\ \hline
$M$ (\cite{cicalese2000optimal}) & 8 & 16 & 32 & 32 & 64 & 128 & 256 & 512 & 1024 
\\ \hline
\end{tabular}
\bigskip
\caption{Lower bounds of codes with complete feedback}\label{table::adaptive strategies}
\end{table}

\section*{Conclusion}

In this work, we derived a new optimality condition for asymmetric error-correcting codes. We want to maximize the number of free points, i.e. such points that the distance from any codeword to a point is bigger than $t$.
While for a symmetric channel this number remains the same for the same number of codewords, it can change for an asymmetric channel. We then use the codes with optimally many free points to develop two-stage coding strategies. Here we even improve the known results for the case with complete feedback for some lengths and one error.

\section*{Acknowledgment}
 Christian Deppe was supported by the Bundesministerium f\"ur Bildung und Forschung (BMBF) through Grant 16KIS1005. Alexey Lebedev was supported by RFBR and the National Science Foundation of Bulgaria (NSFB), project number 20-51-18002. Vladimir Lebedev's work was supported by RFBR and the National Science Foundation of Bulgaria (NSFB), project number 20-51-18002 and by RFBR and JSPS under Grant No. 20-51-50007. Ilya Vorobyev was supported by RFBR under Grant No. 20-01-00559, by RFBR and the National Science Foundation of Bulgaria (NSFB), project number 20-51-18002, by RFBR and JSPS under Grant No. 20-51-50007.

\bibliographystyle{IEEEtran}
\bibliography{mybib}

\end{document}